\newtheorem{prop}{Proposition}
\newtheorem{defn}{Definition}
\newtheorem{ex}{Example}
\DeclareMathOperator{\wt}{wt}
\DeclareMathOperator{\lcm}{lcm}
\DeclareMathOperator{\aut}{Aut}
\DeclareMathOperator{\ev}{ev}
\DeclareMathOperator{\triv}{Triv}
\begin{document}
\title{On the dependency between the code symmetries and the decoding efficiency} 
 \author{%
	\IEEEauthorblockN{Kirill Ivanov and R\"udiger Urbanke}
	\IEEEauthorblockA{School of Computer and Communication Sciences\\
		EPFL, CH-1015 Lausanne, Switzerland\\
		Email: \{kirill.ivanov, ruediger.urbanke\}@epfl.ch}
}

\maketitle
\begin{abstract}
	A framework of monomial codes is considered, which includes linear codes generated by the evaluation of certain monomials. Polar and Reed-Muller codes are the two best-known representatives of such codes and can be considered as two extreme cases. Reed-Muller codes have a large automorphism group but their low-complexity maximum likelihood decoding still remains an open problem. On the other hand, polar codes have much less symmetries but admit the efficient near-ML decoding.
	
	We study the dependency between the code symmetries and the decoding efficiency. We introduce a new family of codes, partially symmetric monomial codes. These codes have a smaller group of symmetries than the Reed-Muller codes and are in this sense "between" RM and polar codes. A lower bound on their parameters is introduced along with the explicit construction which achieves it. Structural properties of these codes are demonstrated and it is shown that they often have a recursive structure. 
\end{abstract}
\section{Introduction}
Reed-Muller (RM) codes \cite{muller1954rm,reed1954rm} are very well known in the classical coding theory. It was recently proved that this family of codes achieves the capacity of the binary erasure channel under ML decoding \cite{Kudekar2016RM}. The equivalent question for general BMS channels is still open and so is low-complexity near-ML decoding of RM codes.

Reed-Muller codes have rich structural properties, such as invariance w.r.t. a large group of permutations that can be expressed as actions on monomials. The best decoding algorithms for RM codes exploit this invariance and in particular the fact that RM code can be represented as Plotkin concatenation of two smaller RM codes in several different ways. The projection-aggregation algorithm \cite{ye2019recursive} approaches ML performance for short and moderate lengths but its complexity grows exponentially with the code order. This makes it feasible only for low-rate codes. Dumer and Shabunov proposed \cite{dumer2006rec} to perform the decomposition recursively for different permutations and consider a list of most likely codewords at each stage. This approach and its variations achieves ML performance with the list size that grows exponentially with a code length. The use of multiple permutations brings the additional benefit of parallelism, which might lead to latency reduction and simplification of the hardware implementation \cite{kamenev2019rm}.

Polar codes \cite{Arikan2009polar} achieve the capacity of an arbitrary BMS channel. Contrary to RM codes, they are specifically constructed so that successive cancellation list (SCL) decoding with small list size \cite{Tal2015list} (which works similarly to Dumer-Shabunov list decoder) is sufficient for near-ML performance. However, this property comes at the cost of the code structure, so both methods designed for RM codes are inefficient for polar codes and polar-like constructions with better finite-length performance, such as CRC-aided polar codes \cite{Tal2015list} or polar subcodes \cite{trifonov2016subcodes}.

One can now ask whether it would be possible to find codes that have a smaller group of symmetries than Reed-Muller codes but also require a smaller decoding complexity for near-ML performance. Previous works include an efficient construction for two permutations \cite{wang2019gn}. It performs worse compared to polar codes under list decoding but allows turbo-like decoding with significantly smaller latency. Polar codes with explicit permutation group are proposed in \cite{kamenev2019polar}, which perform worse compared to polar codes.

In this paper, we continue this line of research and investigate this question from a code structure point of view. Here we consider codes that can be obtained via evaluations of monomials \cite{bardet2016algebraic}. Polar and Reed-Muller codes can both be described in this framework. We introduce a family of codes with certain symmetries and show the lower bound on their parameters. A channel-independent construction achieving this bound is proposed and it is shown that in some cases the obtained codes have the recursive structure. \footnote{The source code to reproduce the bounds and code constructions is available at https://github.com/kir1994/PartSymCodes}
\section{Monomial codes}

\subsection{Monomial codes}
Let $\{ x_1, \ldots, x_m \}$ be a collection of $m$ variables taking their values in $\mathbb F_2$, let $v = (v_1, \ldots, v_m) \in \mathbb{F}_2^m$ be any binary $m$-tuple, and let $\wt(\cdot)$ denote Hamming weight. Then,
\[
x^v = \prod_{i=1}^{m} x_i^{v_i}
\]
denotes a monomial of degree $\wt(v)$. Its evaluation vector $\ev(x^v)\in \mathbb F_2^m$ can be obtained by evaluating $x^v$ at all points of
$\mathbb{F}_2^m$. It can be noticed that $\wt(\ev(x^v))=2^{m-\wt(v)}$.

Condider the set of all $m$-variate monomials $$M_m=\{x^v | v \in \mathbb F_2^m \}.$$ Any its subset $M_{\mathcal C}$ is a \textit{generating set} of a $(2^m, |M_{\mathcal C}|)$ monomial code $\mathcal C$ spanned by the evaluations of these monomials, i.e., 

$$
\mathcal C = \text{span}(\{ \ev(x^v) |\  x^v \in M_{\mathcal C} \}).
$$

Minimum distance of a monomial code $\mathcal C$ is $2^{m-r^+(\mathcal C)}$ \cite{bardet2016algebraic}, where
\begin{equation}
r^+(\mathcal C)=\max_{x^v \in M_{\mathcal C}}\wt (v).
\end{equation}

\subsection{Polar codes}
A $(n=2^m,k)$ polar code with the set of frozen symbols $\mathcal
F$ is a binary linear block code generated by
rows with indices $i\in[n]\setminus \mathcal F$ of the matrix
$A_m=A^{\otimes m}$, where $A=\begin{pmatrix}1&0\\1&1\end{pmatrix}$
and $\otimes m$ denotes $m$-times Kronecker product of a matrix
with itself. For a given BMS channel $W$, the set $\mathcal F$ contains indices of the least reliable bit subchannels under successive cancellation decoding.

It can be seen that the $i$th row of $A_m$ is in fact the evaluation vector of monomial $x^{\overline i}$, where $\overline i$ is the bitwise NOT operation. Hence, polar codes are monomial codes with generating set $M_{\mathcal F}=\{x^{\overline i} | i \in \mathcal F^c\}$.

\subsection{Reed-Muller codes}
The Reed-Muller code RM$(r,m)$ is spanned by evaluations of $m$-variate monomials of degree not exceeding $r$. RM$(r,m)$ code has length $2^m$, dimension $\sum_{i=0}^{r}\binom{m}{i}$ and minimum distance $2^{m-r}$. By definition, it is a monomial code with generating set $M_{r,m}=\{x^v | \wt(v)\le r\}$.

\section{Projections}
\subsection{Factor graph permutations}
\begin{figure}
	\centering
	\scalebox{1.25}{\begin{tikzpicture}[thick,scale=0.55, every node/.style={scale=0.6}]
		\input{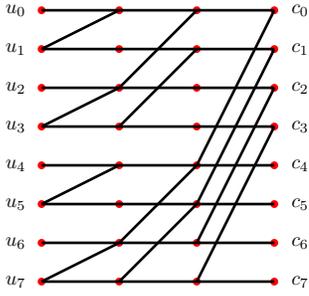}
		
		\draw node at (    5,    0) {{\color{black}$c_0$}};
		\draw node at (	5,		-0.75)  {{\color{black}$c_1$}};
		\draw node at (	5,		-1.5)  {{\color{black}$c_2$}};
		\draw node at (	5,		-2.25)  {{\color{black}$c_3$}};
		\draw node at (	5,		-3)  {{\color{black}$c_4$}};
		\draw node at (	5,		-3.75)  {{\color{black}$c_5$}};
		\draw node at (	5,		-4.5)  {{\color{black}$c_6$}};
		\draw node at (	5,		-5.25)  {{\color{black}$c_7$}};
		
		\draw node at (    -0.5,    0) {{\color{black}$u_0$}};
		\draw node at (	-0.5,		-0.75)  {{\color{black}$u_1$}};
		\draw node at (	-0.5,		-1.5)  {{\color{black}$u_2$}};
		\draw node at (	-0.5,		-2.25)  {{\color{black}$u_3$}};
		\draw node at (	-0.5,		-3)  {{\color{black}$u_4$}};
		\draw node at (	-0.5,		-3.75)  {{\color{black}$u_5$}};
		\draw node at (	-0.5,		-4.5)  {{\color{black}$u_6$}};
		\draw node at (	-0.5,		-5.25)  {{\color{black}$u_7$}};
\end{tikzpicture}}
	\caption{Polar factor graph for $m=3$.}
	\label{fig:factgraph}
\end{figure}
The propagation of LLR values during the decoding of polar code can be represented with an $m$-layer factor graph \cite{hussami2009perm}. An example of this graph is presented at Figure \ref{fig:factgraph}. Consider a permutation $\pi$ of its layers, or, equivalently, permutation $\hat \pi$ of the input LLR vector performed by applying $\pi$ to the binary representation of elements' positions. It can be seen that the bit estimation order also becomes permuted according to $\hat \pi$, and thereafter the subchannel reliabilities. 

It is easy to see that any factor graph layer permutation can also be expressed as action on monomials $\mathbf x\to P\mathbf x$ for $m\times m$ permutation matrix $P$.


\subsection{Projections}


The derivative in direction $b$ of a monomial is defined as
\[
D^b( x^v) = (x+b)^v - x^v.
\]

Similarly, the derivative of a monomial code $\mathcal C$ is the binary linear code with a generating set 

\begin{equation}
M_{\mathcal C\to b}=\left\{D^b(x^v) |x^v \in M_{\mathcal C}\right\}.
\label{eq:proj}
\end{equation}

Its codewords have the same values at positions $u$ and $u+b$, so we can write its generator matrix as $G=\begin{pmatrix}G^{(b)}&G^{(b)}\end{pmatrix}P'$, where $P'$ is a column permutation matrix and $G^{(b)}$ is a generator matrix of some ($2^{m-1},k^{(b)},d^{(b)}$) code $\mathcal C^{(b)}$, which will be subsequently denoted as the projected code or the projection. 

Observe that $(x+b)^v$ can be viewed as an action of the element of the translation group $\mathcal T_m$: $x\to x+b$ (which swaps the evaluation points $x$ and $x+b$). Then any codeword of the direction derivative is a sum of the codeword of $\mathcal C$ and its permutation. Therefore, a permutation defined by $b$ belongs to $\aut(\mathcal C)$ iff all elements of $M_{\mathcal C \to b}$ can be represented as sums of monomials from $ M_{\mathcal C}$ (one can also recall the definition of the weakly decreasing codes from \cite{bardet2016algebraic} and verify that $\mathcal T_m\in \aut(\mathcal C)$ is equivalent to the code being weakly decreasing).

Note that the derivative in the direction of unit vector $e_i$ is just a partial derivative:
\[
D^{e_i}(x^v) = \frac{\partial x^v}{\partial x_i},
\]
which implies that partial derivatives of monomial codes are also monomial. 

The projections are inherently connected to the Plotkin $(u|u+v)$ construction. Given a codeword of a monomial code $\mathcal C$, each projection corresponds to a vector $v\in \mathcal C^{(b)}$ and therefore performs the decomposition of $\mathcal C$ in $(u|u+v)$ form.

\section{Partially symmetric monomial codes}
For a BMS channel $W$, consider the decoding of some rate-$I(W)$ code $\mathcal C$ via its projections. The decoding of projected codes is performed in the 'XOR' channel (which corresponds to $W^-$ in polar coding notation). If a projection $\mathcal C^{(b)}$ has the rate $R_{\mathcal C^{(\mathbf b)}}$ that far exceeds the capacity of $W^-$, even the ML decoding is very likely to be erroneous and some techniques, e.g., list decoding, are needed to make use of that projection for the overall decoding. 

One way to think about it is through the polar coding and $(u|u+v)$ construction. When $R_{\mathcal C^{(b)}}$ is greater than $I(W^-)$, asymptotically there is a constant fraction of "badly" polarized data subchannels (in particular, $R_{\mathcal C^{(b)}}-I(W^-)$) and therefore the number of candidate values of $v$ that might correspond to an ML solution is exponential.

Polar codes are constructed so that for some $\hat i$ the projected code $\mathcal C^{(e_{\hat i})}$ has rate $\approx I(W^-)$. However, for all $ b\neq e_{\hat i}$ the projected codes have much higher rates, which makes the decoding via other projections inefficient. Any projection of Reed-Muller code RM$(r,m)$ is a Reed-Muller code RM$(r-1,m-1)$, so it is tempting to make different projections work together for the decoding purposes. However, these symmetries also constraint the decoding performance of any projection-based algorithm. Namely, if we consider a channel $W$ such that $I(W)=\dim \text{RM}(r,m)$, it is easy so see that the dimension of RM$(r-1,m-1)$ exceeds the capacity of $I(W^-)$. Furthermore, it is also possible to show that, e.g., if we fix the RM code rate to $1/2$, the rate of the projected code also converges to $1/2$ when the code length goes to infinity.

It is possible to establish the link between the symmetries of Reed-Muller codes and the (in)efficiency of projection-based decoding. By construction, all projections of RM codes are identical. As we will show further in this section, it puts a lower bound on the dimension of projections, and the RM codes, in fact, achieve this bound. 

One can now ask the question: if we sacrifice some of the code symmetries, what could be the potential gain? Would it be possible to achieve near-ML low-complexity decoding? In this paper, we investigate the properties of such codes. It turns out that the answer to the second question is mostly asymptotically negative. However, the considered framework and the corresponding codes can still be of sufficient interest in the finite-length regime and can serve as a foundation for further research. In this paper, we focus on partial derivatives due to their simplicity.

\begin{defn}
	A $t$-symmetric monomial code $\mathcal C_{m,t}$ is a binary monomial code of length $2^m$ such that the dimensions of $t$ of its partial derivatives are equal and the dimensions of $m-t$ others are strictly greater.
\end{defn}
In other words, there exists some set of \textit{target derivatives} $\mathcal H_t\subset \triv_m,|\mathcal H|=t$, such that $\forall e_i\in \mathcal H_t\ \dim \mathcal C^{(e_i)}_{m,t}=\tilde k_{m,t}$ and $\forall e_i\notin \mathcal H_t\ \dim \mathcal C^{(e_i)}_{m,t}>\tilde k_{m,t}$. A code is fully symmetric if $t=m$, non-symmetric if $t=1$ and partially symmetric otherwise. The Reed-Muller codes are fully symmetric and the polar codes in general are non-symmetric. Without loss of generality, we assume $\mathcal H_t=\{e_i|i\in \{1,\dots,t\}\}$.

As we established above, the efficiency of low-complexity decoding via projections depends on the rates of the projected codes. Therefore, for a given code dimension $k_{m,t}$ we are interested in the achievable values of $\tilde k_{m,t}$ and, in particular, the lower bound
\begin{equation}
\tilde k^*_{m,t}=\min_{\dim \mathcal C_{m,t}= k_m,{t}}\tilde k_{m,t}.
\end{equation}

\subsection{Lower bounds}
The problem of finding the minimum achievable projected code dimension $\tilde k^*_{m,t}$, or equivalently the minimum achievable rate $\tilde r^*_{m,t}$, for monomial codes is mostly combinatorial. 

Observe that when a monomial $x^v$ is removed from $M_{m}$, the dimension of derivative w.r.t. $x_i$ is decreased by 1 if $v_i=1$. Let us consider now the set $\mathcal M_t=\{x_1,\dots,x_{t}\}$. For $l\le t$, the number of monomials with $l\le t$ variables in $\mathcal M_t$ is $\binom{t}{l}2^{m-t}$ ($l$ variables out of $t$ can be selected in $\binom{t}{l}$ ways with any combination of the remaining $m-t$). We want the target derivatives to have equal dimensions, so for $l<t$ we have to remove $\lcm(t,l)/t$ monomials from $M_{m}$ at once and decrease the dimension of all $t$ target derivatives by $\lcm(t,l)/l$.

It is easy to see that the monomials with greater $l$ contribute more to the dimension reduction, hence in order to calculate the lower bound we at first remove the monomials that contain all $t$ variables, then the monomials that contain $t-1$, then $t-2$, etc. until the target $k_{m,t}$ is reached. Note that some values of $k_{m,t}$ cannot be achieved due to the granularity constraints. For $t=m$ and $k_{m,m}=\sum_{i=0}^{r}\binom{m}{i}$ we obtain the Reed-Muller code RM$(r,m)$.


Figure \ref{fig:r10} demonstrates the lower bound on $\tilde r^*_{m,t}$, computed for the case $m=10$ and $1\le t \le m$. This result can be interpreted as follows. If a curve lies above the reference BEC/BSC lines, the rate of the projected code exceeds the capacity of the underlying channel and the near-ML decoding via projections asymptotically needs exponential complexity. We can see that for most values of $t$ the efficient decoding can potentially be performed only for rates near 0 and 1. Case $t=3$ might be of some interest in BSC channel and $t=2$ can potentially be useful for any channels. Reed-Muller codes are on the $t=m$ curve, and for polar codes the bound $t=1$ is trivial since they lie exactly on the reference curves. An important observation here is that even though partially symmetric codes are mostly asymptotically bad, their rates scale better compared to RM codes and therefore they still need smaller decoding complexity in the finite-length regime.

\begin{figure}
	\centering
	\includegraphics[width=\linewidth]{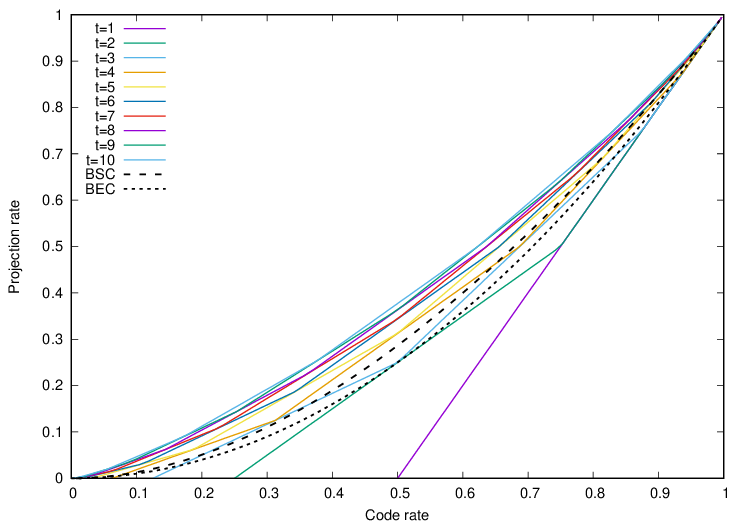}
	\caption{Achievable projected code rates, $n=1024$.}
	\label{fig:r10}
\end{figure}

We can use the procedure for obtaining the lower bound in order to establish the upper bound on the minimum distance of codes achieving $\tilde k^*_{m,t}$. For every $l$ we simply sort the monomials with $l$ variables in $\mathcal M_t$ by the total degree and perform the removal procedure from the highest to the lowest.
\begin{prop}
	\label{prop:dmin}
	Consider a code $\mathcal C_{m,t}$ which achieves the lower bound and assume that the monomial removal procedure stopped at some $l=\hat l$. Then the minimum distance of $\mathcal C_{m,t}$ is at most $2^{m-z}$, where $z=\hat l+m-t$ if less than $\binom{t}{\hat l}$ first entries at stage $l=\hat l$ were removed and $z=\hat l-1+m-t$ otherwise.
\end{prop}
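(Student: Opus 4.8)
The plan is to translate the statement about minimum distance entirely into a statement about the largest total degree among the surviving monomials. Since the excerpt records that a monomial code has minimum distance $2^{m-r^+(\mathcal C)}$ with $r^+(\mathcal C)=\max_{g\in M_{\mathcal C}}\deg g$, maximizing $d_{\min}$ is the same as minimizing the maximal degree of a kept monomial. The greedy rule of removing, inside each stage $l$, the monomials of largest total degree first is exactly the rule that pushes this maximal surviving degree as low as the balancing constraints permit, so the whole argument reduces to locating the highest-degree monomial still present when the procedure halts at $l=\hat l$.

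First I would record the degree profile of a single stage. A monomial counted at stage $l$ uses exactly $l$ of the variables of $\mathcal M_t$ and an arbitrary subset of the remaining $m-t$ variables, so its total degree ranges over $l,l+1,\dots,l+(m-t)$. The maximal value $l+(m-t)$ is attained precisely by the monomials that use all $m-t$ outside variables together with some choice of $l$ variables of $\mathcal M_t$; there are exactly $\binom{t}{l}$ of these. Moreover, deleting this whole top-degree class decreases each target projection $\mathcal C^{(2^q)}$, $q\in[t]$, by the same amount $\binom{t-1}{l-1}$ (a top monomial is divisible by $x_q$ iff $x_q$ is among its $l$ chosen variables), so the top-degree class is itself a legitimate balanced block with respect to the lower-bound construction.

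Next I would describe the surviving set. After the procedure stops at $l=\hat l$, every stage with $l>\hat l$ has been emptied, every stage with $l<\hat l$ (and the degenerate $l=0$ monomials) is kept in full, and stage $\hat l$ is only partially removed. The fully kept stage $\hat l-1$ always contributes a monomial of degree $(\hat l-1)+(m-t)$, so $r^+\ge \hat l-1+m-t$ unconditionally, and the decision hinges on whether a degree-$(\hat l+m-t)$ monomial of stage $\hat l$ also survives. If fewer than $\binom{t}{\hat l}$ monomials were deleted at stage $\hat l$, the top-degree class cannot be exhausted, one of its members remains, and $r^+=\hat l+m-t=:z$; if at least $\binom{t}{\hat l}$ were deleted, the greedy order has cleared that class entirely and the governing monomial drops to stage $\hat l-1$, giving $r^+=\hat l-1+m-t=:z$. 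Substituting into $d_{\min}=2^{m-r^+}$ yields the two claimed values $2^{m-z}$.

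The step I expect to be the main obstacle is justifying the ``upper bound'' phrasing. The stage $\hat l$ at which the procedure halts and the number of monomials it deletes there are both pinned down by the prescribed descending order $l=t,t-1,\dots$ — forced by the rate-minimizing lower bound — together with the fixed dimension $k_{t,m}$, so which of the two cases applies is an invariant of the construction rather than a free choice. In the first case one cannot delete all $\binom{t}{\hat l}$ top-degree monomials with fewer than $\binom{t}{\hat l}$ deletions, so a degree-$(\hat l+m-t)$ monomial survives for \emph{every} admissible removal and $d_{\min}=2^{m-z}$ is forced; in the second case the degree-sorted order attains $r^+=\hat l-1+m-t$, whereas any non-greedy order could leave a top-class monomial and only lower $d_{\min}$, so $2^{m-z}$ is the largest minimum distance consistent with the bound and is attained by the greedy removal. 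The genuine check here is that deleting highest-degree monomials first remains compatible with keeping the $t$ target projections balanced at the halting point, which is precisely what the block computation above guarantees, since each complete degree class of stage $\hat l$ is itself a balanced block.
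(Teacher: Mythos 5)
Your proposal is correct and follows essentially the same route as the paper: the paper's own justification is exactly the observation that the maximal total degree of a monomial at stage $l$ is $l+m-t$, that there are $\binom{t}{l}$ such monomials, and that the case split depends on whether this top-degree class is exhausted. You simply spell out the resulting case analysis (including the balance of the top-degree block across the $t$ projections) in more detail than the paper does.
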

This result follows directly from the construction. For any $l$, the maximal total degree of monomials considered in this stage is $l+m-t$ ($l$ variables from $\mathcal M_t$ and all $m-t$ from $\mathcal M_t^c$), and there are $\binom{t}{l}$ such monomials.

Unfortunately, it turns out that $t$-symmetric codes achieving the lower bound for small values of $t$ would have rather small minimum distance and therefore poor ML performance. For example, consider the construction of $2$-symmetric codes with dimension $k_{2}\ge 2^m-2^{m-2}$. The procedure stops at $\hat l=2$, so from proposition \ref{prop:dmin} we can immediately deduce that the minimum distance of such codes is at most 2. $3$-symmetric codes also have the minimum distance inferior to the polar codes. To overcome this issue, we suggest in practice to first remove all monomials up to a certain degree $r$, i.e., to construct $t$-symmetric codes $\mathcal C_{m,t}\subset \text{RM}(r,m)$.

\subsection{Code construction}
\label{ssDesign}
In this section we present the method for constructing codes that achieve $\tilde k^*_{m,t}$ for a given $k_{m,t}$. For the sake of simplicity, let us first assume $r=m$.
\begin{enumerate}
	\item Take $M_{\mathcal C}=M_{m}$, $k=|M_{m}|=2^m$.
	\item Set $\hat l=t$. While $k-2^{m-t}\binom{t}{\hat l} \ge k_{m,t}$, remove from $M_{\mathcal C}$ all monomials with $\hat l$ variables in $\mathcal M_t$ and decrease $\hat l$ by one, $k$ by $2^{m-t}\binom{t}{\hat l}$.
	\item Set $\hat d=m-t+\hat l$. While $k-\binom{t}{\hat l}\binom{m-t}{\hat d-\hat l} \ge k_{m,t}$, remove all degree-$\hat d$ monomials with $\hat l$ variables in $\mathcal M_t$ and decrease $\hat d$ by 1, $k$ by $\binom{t}{\hat l}\binom{m-t}{\hat d-\hat l}$.
\end{enumerate}
\begin{ex}
	Consider $m=4,t=3$ and $k_{4,3}=8$, $\mathcal M_3=\{x_1,x_2,x_3\}$. Table \ref{tab:monrem} contains all monomials with at least one variable in $\mathcal M_3$.
	\begin{table}[]
		\caption{Monomials to remove.}
		\label{tab:monrem}
		\centering
		\begin{tabular}{|c|c|c|}
			\hline
			\textbf{$l$} &\textbf{Impact on dimension} &\textbf{Monomials} \\ \hline
			$3$ & Remove 1 monomial  & $x_1x_2x_3x_4$,\\
			& $k$ decreases by 1 & $x_1x_2x_3$    \\ \hline
			$2$   & Remove 3 monomials  & $x_1x_2x_4$, $x_1x_3x_4$, $x_2x_3x_4$,   \\ 
			& $k$ decreases by 2&  $x_1x_2$, $x_1x_3$, $x_2x_3$,\\ \hline
			1 & Remove 3 monomials & $x_1x_4$, $x_2x_4$, $x_3x_4$\\ 
			& $k$ decreases by 1 & $x_1$, $x_2$, $x_3$\\ \hline
		\end{tabular}
	\end{table}
	
	Start from $M_{4}, k=16$ and go to step 2. Set $\hat l=3$. $16-2^{4-3}\binom{3}{3}=14\ge 8$, so we remove all monomials that contain $x_1x_2x_3$ ($x_1x_2x_3x_4$ and $x_1x_2x_3$), now $k=14$ and $\hat l=2$. $14-2^{4-3}\binom{3}{2}=8\ge 8$, so we remove all monomials that contain $x_1x_2$, $x_1x_3$ or $x_2x_3$  ($x_1x_2x_4$, $x_1x_3x_4$, $x_2x_3x_4$ and $x_1x_2$, $x_1x_3$, $x_2x_3$), now $k=8$ and the construction procedure is terminated since $k=k_{4,3}$. 
	
	The constructed $(16,8,4)$ code has generating set $M_{\mathcal C_{4,3}}=\{x_1x_4, x_2x_4, x_3x_4,x_1,x_2,x_3,x_4,1\}$. Its target directional derivatives have generating sets $M_{\mathcal C_{4,3}\to e_1}=M_{\mathcal C_{4,3}\to e_2}=M_{\mathcal C_{4,3}\to e_3}=\{x_4,1\}$ of cardinality $2$.
\end{ex}

For the case $r<m$, at step 1 we need to take $M_{r,m}$ instead of $M_{m}$, at step 2 term $2^{m-t}\binom{t}{\hat l}$ is replaced with $\binom{t}{\hat l}\sum_{i=0}^{\min(m-t,r-\hat l)} \binom{m-t}{i}$ 
and at step 3 the initial value of $\hat d$ becomes $\min(m-t+\hat l,r)$ (since after step 1 all monomials with degree greater than $r$ are already removed and therefore out of consideration).

\subsection{Achieving the remaining values of $k_{m,t}$}
\label{ssFin}
The procedure described in the previous section cannot obtain many achievable values of $k_{m,t}$. It stops when $k-\binom{t}{\hat l}\binom{m-t}{\hat d-\hat l}$ is smaller than $k_{m,t}$ for some $\hat l, \hat d, k\neq k_{m,t}$. So we need to remove $k-k_{m,t}$ degree-$\hat d$ monomials with $\hat l$ variables in $\mathcal M_t$ so that all $t$ target derivatives have the dimension $\tilde k^*_{m,t}$.

Consider the bipartite graph $\mathcal G=(V_L, V_R,E)$ with left vertices $u_{i}\in V_L$ isomorphic to variables $x_i,1\le i \le t$ and right vertices $u_{v}\in V_R$ isomorphic to all monomials $x^v$ of total degree $\hat d$, having $\hat l$ variables in $\mathcal M_t$. We draw an edge between $u_{i}$ and $u_{v}$ if the monomial $x^v$ contains $x_i$. This graph is $(\binom{t-1}{\hat l-1}\binom{m-t}{\hat d-\hat l},\hat l)$-biregular and we want to remove $k-k_{m,t}$ its right vertices so that the graph remains biregular, i.e., find its $(\cdot, \hat l)$-biregular subgraph $\mathcal G'$. Without loss of generation assume $t=m$ (otherwise we can split $\mathcal G$ into partitions corresponding to different monomials from $\mathcal M_t^c$). Figure \ref{fig:subgraph} demonstrates an example of graph $\mathcal G$ for $m=4,\hat d=\hat l = 2$ and one of its possible $(1,2)$-regular subgraphs $\mathcal G'$ (in red).

\begin{figure}
	\centering
	\scalebox{1}{\begin{tikzpicture}[thick,scale=0.55, every node/.style={scale=0.6}]
\tikzstyle{lnode}  = [radius=0.5, black, thick, fill=white];
\tikzstyle{rnodeS}  = [radius=0.5, red, thick, fill=white];
\tikzstyle{lineG}  = [-, thick,  red];

\coordinate (0) at (1,1);
\coordinate (1) at (1,3);
\coordinate (2) at (1,5);
\coordinate (3) at (1,7);
\coordinate (4) at (1,3);
\coordinate (20) at (1.5,1);
\coordinate (21) at (1.5,3);
\coordinate (22) at (1.5,5);
\coordinate (23) at (1.5,7);
\coordinate (24) at (1.5,3);

\coordinate (5) at (6,-1);
\coordinate (6) at (6,1);
\coordinate (7) at (6,3);
\coordinate (8) at (6,5);
\coordinate (9) at (6,7);
\coordinate (10) at (6,9);
\coordinate (35) at (5.5,-1);
\coordinate (36) at (5.5,1);
\coordinate (37) at (5.5,3);
\coordinate (38) at (5.5,5);
\coordinate (39) at (5.5,7);
\coordinate (310) at (5.5,9);
\draw (0) [rnodeS] circle;
\draw (0) node {$x_4$};
\draw (1) [rnodeS] circle;
\draw (1) node {$x_3$};
\draw (2) [rnodeS] circle;
\draw (2) node {$x_2$};
\draw (3) [rnodeS] circle;
\draw (3) node {$x_1$};

\draw (5) [rnodeS] circle (0.6cm);
\draw (5) node {$x_3x_4$};
\draw (6) circle (0.6cm);
\draw (6) node {$x_2x_4$};
\draw (7) circle (0.6cm);
\draw (7) node {$x_2x_3$};
\draw (8) circle (0.6cm);
\draw (8) node {$x_1x_4$};
\draw (9) circle (0.6cm);
\draw (9) node {$x_1x_3$};
\draw (10) [rnodeS] circle (0.6cm);
\draw (10) node {$x_1x_2$};

\draw  [lineG] (20) -- (35);
\draw (20) -- (36);
\draw (20) -- (38);
\draw  [lineG] (21) -- (35);
\draw (21) -- (37);
\draw (21) -- (39);
\draw (22) -- (36);
\draw (22) -- (37);
\draw  [lineG] (22) -- (310);
\draw (23) -- (38);
\draw (23) -- (39);
\draw  [lineG] (23) -- (310);

\end{tikzpicture}}
	\caption{$(3,2)$-regular $\mathcal G$ and $(1,2)$-regular $\mathcal G'$}
	\label{fig:subgraph}
\end{figure}
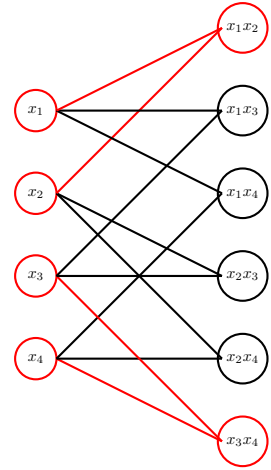

For every achievable $k_{m,t}$, it can be seen that the biadjacency matrix of $\mathcal G'$ satisfies the existence conditions from \cite{brualdi1980matrices}, so we can always find such $\mathcal G'$. This can be done, e.g., by finding the maximum flow in a network with the source connected to all left vertices with capacity-$\frac{\lcm(m, \hat l)}{\hat l}$ edges, the sink connected to all right vertices with capacity-$m$ edges and all $e\in E$ having the capacity 1.

\section{Structure of partially symmetric codes}
Let us define $\check{\mathcal C}_{m,t}$ as the code obtained using the procedure from Section \ref{ssDesign}.

\begin{prop}
	For any $\check{\mathcal C}_{m,t}$ holds $\mathcal T_m\in \aut(\check{\mathcal C}_{m,t})$.
\end{prop}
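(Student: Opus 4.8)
The plan is to reduce the claim to the translation-invariance criterion recorded in the discussion of projections: the translation by $\mathbf b$ lies in $\aut(\mathcal C)$ iff every element of $M_{\mathcal C,\mathbf b}$ is a sum of monomials from $M_{\mathcal C}$, which for all $\mathbf b$ simultaneously is exactly the weakly decreasing property. Writing a generator as $x^I=\prod_{i\in I}x_i$ and expanding $\prod_{i\in I}(x_i+b_i)=\sum_{S\subseteq I}\bigl(\prod_{i\in I\setminus S}b_i\bigr)\,x^S$ shows that this sum lies in the span of $M_{\mathcal C}$ for every $\mathbf b$ as soon as every sub-monomial $x^S$, $S\subseteq I$, already belongs to $M_{\mathcal C}$. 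Hence it suffices to prove that the generating set produced by steps 1--3 is closed under taking monomial divisors.

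First I would make explicit which monomials steps 1--3 delete. For a squarefree monomial $g$ let $\deg g$ be its total degree and $l(g)$ the number of its variables lying in $\mathcal M_t=\{x_0,\dots,x_{t-1}\}$. Step 1 retains only $\deg g\le d$; step 2 deletes the complete blocks $l=t,t-1,\dots,\hat l+1$, i.e. every $g$ with $l(g)>\hat l$; and step 3 deletes, among the monomials with $l(g)=\hat l$, those of total degree exceeding $\hat d$. Consequently membership depends on $g$ only through the pair $(l(g),\deg g)$, and
\[
M_{\check{\mathcal C}_{m,t}}=\{g:\deg g\le d,\ l(g)<\hat l\}\cup\{g:l(g)=\hat l,\ \deg g\le\hat d\}.
\]
For $d<m$ the same description holds after capping the thresholds of steps 1 and 3 at $d$, and one checks $\hat d\le d$ in all cases.

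Next I would verify divisor-closure directly from this description. If $g'$ divides $g$ then the variable set of $g'$ is contained in that of $g$, so both $\deg g'\le\deg g$ and $l(g')\le l(g)$; in particular $\deg g'\le\deg g\le d$ whenever $g$ is retained. Suppose $g\in M_{\check{\mathcal C}_{m,t}}$ and $g'\mid g$. If $l(g')<\hat l$ then $g'$ is retained. If $l(g')=\hat l$ then $l(g)\ge\hat l$, but $g$ retained forbids $l(g)>\hat l$, so $l(g)=\hat l$ and $\deg g\le\hat d$, whence $\deg g'\le\hat d$ and $g'$ is again retained; the case $l(g')>\hat l$ cannot occur since it would force $l(g)>\hat l$. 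Thus every divisor of a retained monomial is retained, $M_{\check{\mathcal C}_{m,t}}$ is weakly decreasing, and $\mathcal T_m\in\aut(\check{\mathcal C}_{m,t})$.

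The argument is essentially bookkeeping, and the only point that really needs care is the observation that makes it work: both statistics $\deg g$ and $l(g)$ are monotone under passing to divisors, so removing monomials in decreasing order of $l$ and, within the block $l=\hat l$, in decreasing order of degree, can only produce a set that is downward closed for divisibility. The mild subtlety is the $d<m$ variant, where I would confirm that capping the degree bounds at $d$ leaves this monotonicity intact, which holds because the block-$\hat l$ threshold $\hat d$ never exceeds the global degree bound $d$.
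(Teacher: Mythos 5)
Your proof is correct and follows essentially the same route as the paper: you show that the generating set produced by steps 1--3 is closed under taking monomial divisors (since both the total degree and the number of variables in $\mathcal M_t$ are monotone under divisibility, and the construction removes monomials in decreasing order of these statistics), which is exactly the weakly decreasing property that guarantees $\mathcal T_m\in\aut(\check{\mathcal C}_{m,t})$. Your version is merely more explicit about the case analysis and the $d<m$ variant than the paper's one-sentence argument.
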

This property follows directly from the code construction. Consider a monomial $x^v\in M_{\check{\mathcal C}_{m,t}}$ and assume it has $\tilde l$ variables in $\mathcal M_t$. Any divisor of $x^v$ has $\le \tilde l$ variables in $\mathcal M_t$ and smaller degree, so it cannot be removed from $M_{\check{\mathcal C}_{m,t}}$ nefore $x^v$ in the construction process. As a consequence \cite{bardet2016algebraic}, the dual code $\check{\mathcal C}_{m,t}^\bot$ is also monomial and $\mathcal T_m\in \aut(\check{\mathcal C}_{m,t}^\bot)$.

\begin{prop}
	$\check{\mathcal C}_{m,t}$ is invariant w.r.t. any permutation on sets $\{x_1,\dots,x_{t}\}$ and $\{x_{t+1},\dots,x_{m}\}$.
\end{prop}
It is sufficient to observe that for any $d,l$ all degree-$d$ monomials with $l$ variables in $\mathcal M_t$ (and $d-l$ variables in $\mathcal M_t^c$) are either in $M_{\check{\mathcal C}_{m,t}}$ or its complement.

Note that in some cases codes $\check{\mathcal C}_{m,t}$ might coincide with the ones from \cite{kamenev2019polar} (which are for some choice of parameters invariant w.r.t. $\{x_1,\dots,x_{t}\}$, but are optimized for SC decoding error probability rather than the projected code dimensions), but contrary to them the construction of $\check{\mathcal C}_{m,t}$ is channel-independent.

\begin{prop}
	For any $e_i\in \mathcal H_t$ code $\check{\mathcal C}_{m,t}^{(e_i)}$ achieves $\tilde k^*_{m-1,t-1}$.
\end{prop}
Consider the generating set of code $\check{\mathcal C}_{m,t}$. It contains all monomials of two types:
\begin{enumerate}
	\item With less than $\hat l$ variables in $\mathcal M_t$;
	\item With exactly $\hat l$ variables in $\mathcal M_t$ and total degree less than $\hat d$
\end{enumerate}
for some $\hat l, \hat d$. When we take the derivative in the direction $e_i\in \mathcal H_t$, the monomials of first type now have less than $\hat l-1$ variables in $\mathcal M_t\setminus \{x_i\}$ and the monomials of second type now have exactly $\hat l-1$ variables in $\mathcal M_t\setminus \{x_i\}$ and total degree less than $\hat d-1$. The only thing left here is to notice that this matches exactly the description of code $\check{\mathcal C}_{m-1,t-1}$.

\section{Numeric results}
We demonstrate the performance of partially symmetric monomial codes in BEC($\varepsilon$), where the polynomial-time ML decoding is available. Figure \ref{fig:ferml512} shows the frame error rate of codes of length $512$ and rate $1/2$, constructed for different values of $t$ such that the minimum distance of the constructed codes for $t\le 7$ is equal to 16 (for $t=8=9$ we get the Reed-Muller code with $d_{min}=32$). The polar code at the figure is constructed for each value of $\varepsilon$. Its minimum distance depends on the target erasure probability and jumps from $8$ to $16$ between $\varepsilon=0.36$ and $\varepsilon=0.38$. 

\begin{figure}
	\centering
	\includegraphics[width=\linewidth]{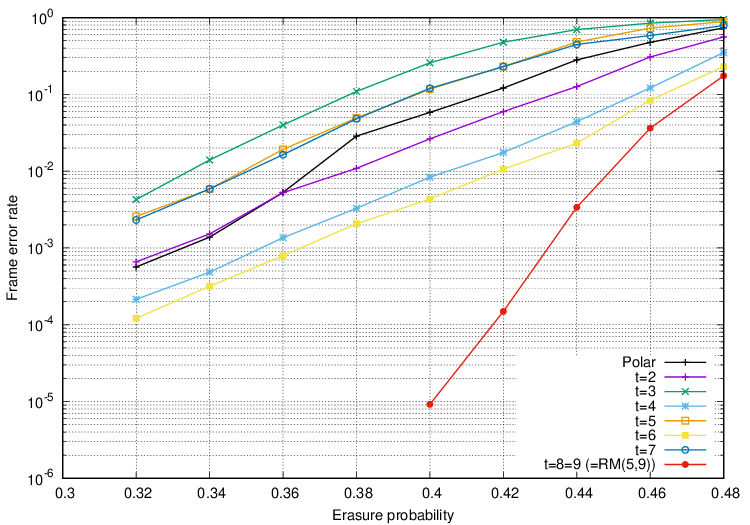}
	\caption{ML performance in BEC, $n=512, k=256$.}
	\label{fig:ferml512}
\end{figure}

One conclusion that can be made from the picture is that the ML performance does not strictly improve with $t$. However, partially symmetric codes for some values of $t$ demonstrate better performance compared to the polar code, although the gap is not large.

To illustrate the projection-based decoding performance, let us consider a variation of projection-aggregation (PA) algorithm \cite{ye2019recursive} for the case of BEC($\varepsilon$). For a code $\mathcal C$, consider some set $\mathcal B$ of its projections. Then in order to decode the erased codeword $(c_0, \dots, c_{2^m-1})$ of $\mathcal C$, we compute the projected codewords $c^{(b)}$ and perform the bitwise MAP decoding on them. Any recovered position of $c^{(b)}$ can be written as $c_{u}\oplus c_{u+b}$, so if only one of the positions $c_{u},c_{u+b}$ is erased, we can recover another one. The process is iteratively repeated until all erasures are recovered or no position was recovered in the current iteration. 

Figure \ref{fig:fer512} shows the performance of partially symmetric codes $\mathcal C_{9,t}\subset \text{RM}(3,9)$, as well as polar and Reed-Muller codes. Set $\mathcal B$ here consists of 16 (out of 511 total) projections with minimum dimensions (which are more likely to be decoded successfully). For polar, $\mathcal C_{9,2}$ and $\mathcal C_{9,4}$ codes PA decoding achieves the maximum likelihood performance, so it is not given on the plot. The polar code given at the figure is constructed for each value of $\varepsilon$. Its minimum distance depends on the target erasure probability and jumps from $8$ to $16$ between $\varepsilon=0.36$ and $\varepsilon=0.38$, which can be observed at the figure. 

\begin{figure}
	\centering
	\includegraphics[width=\linewidth]{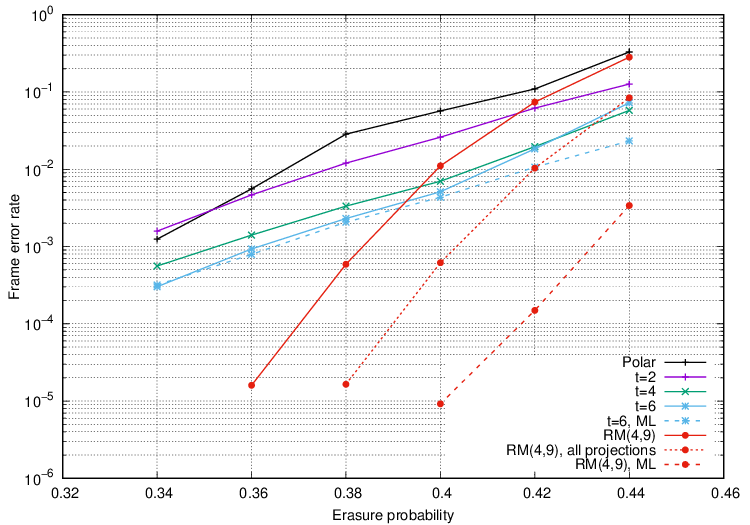}
	\caption{Projection-based decoding in BEC, $n=512, k=256$.}
	\label{fig:fer512}
\end{figure}

It can be seen that for small values of $t$ the projection-based decoding can achieve near-ML performance. However, the projections' dimensions grow with $t$, so the efficiency of this approach drops down. This is particularly noticeable for the larger values of $\varepsilon$. For the Reed-Muller codes (which are fully symmetric), ML performance cannot be reached even when with all $511$ projections. Note that the presented partially symmetric codes demonstrate the performance superior to the polar code, although the gap is not large.

As for the future work, one might think of how to construct the partially symmetric polynomial codes. It is known that non-symmetric polynomial codes can bring significant performance boost under polar-like decoding with a small list size \cite{trifonov2016subcodes}. One can wonder whether a similar improvement can be achieved with (partially) symmetric codes and what are the limitations.

\section{Conclusion}
In this paper, we introduce the new family of monomial codes. These codes have a smaller group of symmetries compared to RM codes but are more adapted for the projection-based decoding. A construction of such codes is proposed and it is shown that the obtained codes in some cases demonstrate the recursive structure.


\bibliographystyle{IEEEtran}
\bibliography{template}
\end{document}